\algrenewcommand\algorithmicthen{\relax}
\algrenewcommand\algorithmicdo{\relax}
\newcommand*{\eII}{Vovk/Wang:2021}
\newcommand*{\eIII}{Vovk/Wang:2023}
\newcommand*{\eIV}{Vovk/Wang:arXiv2003}
\newcommand*{\eV}{Vovk/Wang:2024EJS}
\renewcommand{\d}{\,\mathrm{d}}
\newcommand{\eP}{\lozenge}
\newcommand{\eN}{\square}
\newcommand{\E}{\mathbb{E}}
\renewcommand{\P}{\mathbb{P}}
\newcommand{\R}{\mathbb{R}}
\newcommand{\N}{\mathbb{N}}
\newcommand{\FFF}{\mathcal{F}}
\newcommand{\QQQ}{\mathcal{Q}}
\renewcommand{\complement}{\textsf{c}}
\theoremstyle{plain}
\newtheorem{theorem}{Theorem}
\newtheorem{lemma}[theorem]{Lemma}
\newtheorem{proposition}[theorem]{Proposition}
\theoremstyle{definition}
\theoremstyle{remark}
\newlength{\IndentI}
\newlength{\IndentII}
\newlength{\IndentIII}
\newlength{\IndentIV}
\newlength{\IndentV}
\title{Multiple testing in game-theoretic probability:
  pictures and questions}
\author{Vladimir Vovk}
\begin{document}
\maketitle

  \begin{abstract}
    The usual way of testing probability forecasts in game-theoretic probability
    is via construction of test martingales.
    The standard assumption is that all forecasts are output by the same forecaster.
    In this paper I will discuss possible extensions of this picture
    to testing probability forecasts output by several forecasters.
    This corresponds to multiple hypothesis testing in statistics.
    One interesting phenomenon is that even a slight relaxation of the requirement of family-wise validity
    leads to a very significant increase in the efficiency of testing procedures.
    The main goal of this paper is to report results of preliminary simulation studies
    and list some directions of further research.

    \medskip

    \noindent
    The version of this paper at \url{http://alrw.net/e} (Working Paper 10)
    is updated most often.
  \end{abstract}

\section{Introduction}

Game-theoretic probability,
as presented in, e.g.,
my joint books \cite{Shafer/Vovk:2001} and \cite{Shafer/Vovk:2019} with Glenn Shafer,
is based on the idea that a null hypothesis can be tested dynamically
by gambling against it.
More generally, we are testing a player called Forecaster,
which can be a scientific theory, a computer program, a human forecaster, etc.
The gambler starts from an initial capital of 1 and is required to keep his capital nonnegative.
His current capital is interpreted as the degree to which the null hypothesis has been undermined.

The idea of testing via gambling goes back
at least to Richard von Mises's principle of the impossibility of a gambling system
(Unm\"oglichkeit eines Spielsystems \cite[p.~58]{Mises:1919}),
but von Mises's notion of gambling was too narrow,
and it was only applicable to infinite sequences.
The narrowness of von Mises's notion of gambling
was demonstrated by Ville \cite[Sect.~II.4]{Ville:1939}
(for an English translation, see \cite{Shafer:2005}).
Ville proposed extending von Mises's testing procedure
to using nonnegative martingales \cite[Chap.~IV]{Ville:1939},
but somewhat surprisingly, did not explicitly apply his wider notion of testing
to restate von Mises's principle of the impossibility of a gambling system.
It appears that the idea of testing using nonnegative martingales
emerged gradually in various fields,
including the algorithmic theory of randomness.

In this paper we will be interested in testing several forecasters in one go,
with different forecasters being tested at different steps.
Testing by gambling can be studied in the usual setting of measure-theoretic probability,
and this is what we will do in this paper,
for simplicity and as a first step.
Replacing measure-theoretic probability by game-theoretic probability
as mathematical foundation for our definitions and results
will be one of directions of future research.
For now, each forecaster will be formalized as a composite null hypothesis,
represented by a set of probability measures on the sample space.

In principle, we can consider two settings for testing multiple null hypotheses.
In the \emph{closed} setting,
we have a fixed number $K$ of null hypotheses.
In the \emph{open} setting,
the number of null hypotheses is not known in advance
and is potentially infinite.
In this paper we will concentrate on the closed setting.

This paper has been prepared in support of my planned talk
at the Oberwolfach workshop ``Game-theoretic statistical inference:
optional sampling, universal inference, and multiple testing based on e-values''
organized by Peter Gr\"unwald, Aaditya Ramdas, Ruodu Wang, and Johanna Ziegel
(5--10 May 2024).

\section{Dynamic necessity and possibility}
\label{sec:necessity}

Let us fix a probability space $(\Omega,\FFF,\P)$
equipped with a filtration $\FFF=(\FFF_n)_{n=0}^{\infty}$,
so that $\FFF_0\subseteq\FFF_1\subseteq\dots\subseteq\FFF$
is a nested sequence of $\sigma$-algebras.
Apart from the true probability measure $\P$
we will often consider other probability measures $Q$
on the measurable space $(\Omega,\FFF)$.
Our notation for the expectation of a random variable
$f:\Omega\to[0,\infty]$ w.r.\ to $Q$ will be $\E_Q(f):=\int f \d Q$,
abbreviated to $\E(f)$ when $Q=\P$
(in general, ``w.r.\ to $Q$'' or the indication of $Q$ is usually omitted
when $Q=\P$).
Let $\QQQ$ be the family of all probability measures on $(\Omega,\FFF)$.

A \emph{test martingale} $S$ w.r.\ to $Q$ is a sequence $S_0,S_1,\dots$
of random variables taking values in $[0,\infty]$
such that $S_0=1$ and $\E_Q(S_n\mid\FFF_{n-1})=S_{n-1}$ for all $n=1,2,\dots$.
A \emph{martingale test} is a family $(S^Q)_{Q\in\QQQ}$ of test martingales $S^Q$
w.r.\ to $Q$.
At each time $n$,
we interpret $S^Q_n(\omega)$ as a measure of disagreement
between the realized outcome $\omega$ and its putative explanation $Q$;
we may say that $\omega$ is \emph{$\alpha$-strange} at time $n$ w.r.\ to $Q$
if $S^Q_n(\omega)\ge\alpha$.

Fix a martingale test $(S^Q)$
and let $A\subseteq\QQQ$ be a property of a probability measure $Q$
(with the property being satisfied if and only if $Q\in A$).
The \emph{necessity measure} of $A$ at time $n$
in view of the realized outcome $\omega\in\Omega$ is
\[
  \eN_n(A\mid\omega)
  :=
  \inf_{Q:Q\notin A}
  S^Q_n(\omega),
\]
and the \emph{possibility measure} of $A$ in view of $\omega$ is
\begin{equation*}
  \eP(A\mid\omega)
  :=
  \inf_{Q:Q\in A}
  S^Q_n(\omega)
  =
  \eN_n(A^{\complement}\mid\omega).
\end{equation*}
The interpretation is that $A$ holds
unless $\omega$ is $\eN_n(A\mid\omega)$-strange at time $n$,
and similarly for $\eP$.
It is important that this property of validity
can be applied to all $A$ at the same time;
the martingale test, however, should be chosen in advance.

\section{Multiple testing of a single null hypothesis}

In this section, we fix a probability measure $Q\in\QQQ$ on the sample space;
we are interested in testing whether $Q$ is the true probability measure, $Q=\P$.
For that, we would like to have one test martingale w.r.\ to $Q$.

Instead, we are given $K$ test martingales $S^{(k)}$ for $k=1,\dots,K$.
In the language of game-theoretic probability as presented in \cite{Shafer/Vovk:2019},
we have $K$ Sceptics testing $Q$ as null hypothesis.
Suppose the test martingales $S^{(1)},\dots,S^{(K)}$ are \emph{uncorrelated},
meaning that there exists a predictable sequence $k_n$, $n=1,2,\dots$,
such that $S^{(k)}_n=S^{(k)}_{n-1}$ for all $n$ and all $k\ne k_n$.
(And the requirement of predictability means that each $k_n$ is $\FFF_{n-1}$-measurable.)
This concept and terminology goes back to Shafer \cite[Sect.~12.3]{Shafer:1996}
(at least for the case $K=2$).
The interpretation is that Forecaster is being tested by Sceptic $k_n$ on step $n$.

A convex combination of test martingales is always a test martingale.
In this section we discuss how else we can combine test martingales.
First we notice that the product $S^{(1)}\dots S^{(K)}$
(as well as the product of a subset of $S^{(1)},\dots,S^{(K)}$)
is a test martingale \cite[Proposition 12.5(1)]{Shafer:1996}.
Indeed, dropping the lower index $Q$,
\begin{align*}
  \E
  \left(
    S^{(1)}_n \dots S^{(K)}_n
    \mid
    \FFF_{n-1}
  \right)
  &=
  \E
  \left(
    S^{(1)}_{n-1} \dots S^{(k_n-1)}_{n-1}
    S^{(k_n)}_{n}
    S^{(k_n+1)}_{n-1} \dots S^{(K)}_{n-1}
    \mid
    \FFF_{n-1}
  \right)\\
  &=
  S^{(1)}_{n-1} \dots S^{(k_n-1)}_{n-1}
  S^{(k_n+1)}_{n-1} \dots S^{(K)}_{n-1}
  \E
  \left(
    S^{(k_n)}_{n}
    \mid
    \FFF_{n-1}
  \right)\\
  &=
  S^{(1)}_{n-1} \dots S^{(K)}_{n-1}.
\end{align*}

A \emph{martingale merging function} is a measurable function
$F:[0,\infty)^K\to[0,\infty)$
such that $F(S^{(1)}_n,\dots,S^{(K)}_n)$, $n=0,1,\dots$, is a test martingale
whenever $S^{(1)},\dots,S^{(K)}$ are test martingales
(and we require this to hold for any probability space and any test martingales on it).
We will apply such functions to base test martingales
to get a new test martingale that can be used for testing.
Our definition allows test martingales to take value $\infty$,
and so we extend each martingale merging function in a canonical way
(as in \cite[Sect.~3]{Vovk/Wang:2021}):
namely, we set $F:=\infty$ whenever one or more of its arguments are $\infty$.

An example of a martingale merging function is
(cf.\ \cite{\eII,\eIV})
\begin{equation*}
  \begin{aligned}
    U_n(s_1,\dots,s_K)
    &:=
    \frac{1}{\binom{K}{n}}
    \sum_{\{k_1,\dots,k_n\}\subseteq\{1,\dots,K\}}
    s_{k_1} \dots s_{k_n}\\
    &=
    \frac{1}{\binom{K}{n}}
    \sigma_n(s_1,\dots,s_K),
    \quad
    n\in\{1,\dots,K\},
  \end{aligned}
\end{equation*}
where $\sigma_n$ is the \emph{$n$th elementary symmetric polynomial in $K$ variables}.
In other words, $U_n$ is $\sigma_n$ normalized by dividing by $\sigma(1,\dots,1)$;
normalization ensures that the initial value of the combination of test martingales
starts from 1 as initial capital (and then it is a test martingale).
We will be particularly interested in the cases $n=1$ and $n=2$.

In my previous joint papers with Ruodu Wang \cite{\eII,\eIV},
we referred to the functions $U_n$ as U-statistics,
but this is potentially confusing as we are omitting
``with product as kernel''
as far as the standard statistical notion of U-statistics is concerned.
In this paper I will call $U_n$
\emph{normalized elementary symmetric polynomials} (NESPs).

A \emph{multiaffine polynomial} is defined as a multivariate polynomial
such that none of its monomials has any variable raised to power 2 or more.
(The less formal version ``multilinear polynomial'' of this term
is more popular in literature,
but would have been awkward in this paper.)
A multiaffine polynomial is \emph{positive} if each of its (non-zero) coefficients is positive.
It is \emph{normalized} if its value is 1 when all its arguments are 1.

\begin{proposition}\label{prop:main}
  For a fixed number of arguments $K$,
  every martingale merging function is a multiaffine polynomial
  that is positive and, of course, normalized.
\end{proposition}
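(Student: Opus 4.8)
The plan is to read off the algebraic structure of $F$ from the martingale constraint, applied to test martingales that move in a single coordinate at a time. Normalization is immediate: every test martingale starts at $1$, so $S^{(k)}_0=1$ for all $k$, and the merged process $M_n:=F(S^{(1)}_n,\dots,S^{(K)}_n)$ starts at $M_0=F(1,\dots,1)$; being a test martingale it must start at $1$, whence $F(1,\dots,1)=1$. The work is to prove multiaffinity and positivity, and for that I would isolate a single-step functional equation.

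The core reduction is as follows. Suppose that at some step $n$ there is an event $B\in\FFF_{n-1}$ of positive probability on which the background coordinates are frozen, $S^{(j)}_{n-1}=c_j$ for all $j\ne k$ (with $c_j\ge0$ arbitrary constants), while $S^{(k)}_{n-1}=1$; and suppose that, conditionally on $B$, the increment $S^{(k)}_n$ is an arbitrary nonnegative random variable $Y$ with $\E(Y\mid B)=1$, all other coordinates staying frozen. Such a configuration is realizable on a product space: I would let each $S^{(j)}$, $j\ne k$, reach $c_j$ by an earlier two-point jump from $1$ (the off-$B$ branch chosen to restore mean $1$) and then stay constant, and let $S^{(k)}$ perform its mean-preserving jump only at step $n$; the event $B$ that all background jumps landed on their $c_j$-branch while $S^{(k)}$ has not yet moved has positive probability. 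Since $F$ must merge test martingales on \emph{every} probability space, the identity $\E(M_n\mid\FFF_{n-1})=M_{n-1}$ applies here, and on $B$ it reads $\E(h(Y))=h(1)$ for $h(s):=F(c_1,\dots,c_{k-1},s,c_{k+1},\dots,c_K)$ and every nonnegative $Y$ of mean $1$.

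I then claim this equation forces $h$ to be affine on $[0,\infty)$. Taking $Y$ two-valued, $Y\in\{y_1,y_2\}$ with $py_1+(1-p)y_2=1$ and $y_1\le1\le y_2$, gives $ph(y_1)+(1-p)h(y_2)=h(1)$, i.e.\ the points $(y_1,h(y_1))$, $(1,h(1))$, $(y_2,h(y_2))$ are collinear. Fixing $y_2$ and varying $y_1\in[0,1]$ shows $h$ agrees on $[0,1]$ with a fixed line $\ell_1$ through $(1,h(1))$ and $(y_2,h(y_2))$; fixing $y_1$ and varying $y_2\in[1,\infty)$ shows $h$ agrees on $[1,\infty)$ with a fixed line $\ell_2$ through $(1,h(1))$. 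Since $(y_2,h(y_2))$ lies on $\ell_2$ as well, the lines $\ell_1,\ell_2$ share two points and hence coincide, so $h$ is globally affine. As the base value of the moving coordinate was $1$ but the $c_j$, $j\ne k$, were arbitrary, this establishes that $F$ is affine in each coordinate separately, for all fixed values of the others. A standard finite induction upgrades coordinate-wise affinity to the multiaffine form $F(x)=\sum_{T\subseteq\{1,\dots,K\}}c_T\prod_{k\in T}x_k$: affinity in $x_1$ writes $F=A(x_{-1})+B(x_{-1})\,x_1$, the coefficients $A,B$ are again affine in each remaining variable, and iterating yields the polynomial.

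Finally, positivity is read off from $F\ge0$. Fixing $T$ and restricting to $x_k=t$ for $k\in T$ and $x_k=0$ for $k\notin T$, the polynomial collapses to $\sum_{S\subseteq T}c_S\,t^{|S|}$, a one-variable polynomial of degree $|T|$ with leading coefficient $c_T$; since $F\ge0$ on $[0,\infty)^K$, this is nonnegative for all $t\ge0$, which forces $c_T\ge0$, so every nonzero coefficient is positive. Together with the already-established $F(1,\dots,1)=1$, this gives the proposition. I expect the main obstacle to be the construction in the second paragraph: making precise the simultaneous ``freeze the other coordinates at arbitrary constants while one coordinate performs an arbitrary mean-one jump'' scenario as genuine test martingales on a common filtered space, so that the martingale-merging hypothesis can legitimately be invoked to deliver the functional equation for $h$.
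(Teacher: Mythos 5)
Your proposal is correct and follows essentially the same route as the paper: realize a one-coordinate, mean-preserving two-point jump with the other coordinates frozen to extract coordinate-wise affinity of $F$, upgrade to the multiaffine polynomial form, and get positivity by zeroing the variables outside a putatively negative monomial and letting the remaining ones tend to infinity. The only (harmless) deviation is that you replace the paper's appeal to Gallier's Lemma~4.1.3 (and its subsequent patch extending $f$ from $[0,\infty)^K$ to an affine space) with a short self-contained induction $F=A(x_{-1})+B(x_{-1})x_1$, which works directly on $[0,\infty)^K$.
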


Let us say that a function of several variables is \emph{symmetric}
if it is invariant w.r.\ to the permutations of its arguments.
Specializing Proposition~\ref{prop:main} to symmetric functions,
we obtain the following statement.

\begin{proposition}\label{prop:symmetric}
  For a fixed number of arguments $K$,
  every symmetric martingale merging function
  is a convex mixture of the NESPs $U_n$, $n=0,\dots,K$.
\end{proposition}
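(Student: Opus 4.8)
The plan is to deduce this from Proposition~\ref{prop:main} and then use symmetry to pin down the coefficients. By Proposition~\ref{prop:main}, any symmetric martingale merging function $F$ is in particular a positive, normalized multiaffine polynomial, so I would write it in its monomial expansion
\[
  F(s_1,\dots,s_K)
  =
  \sum_{S\subseteq\{1,\dots,K\}}
  c_S \prod_{k\in S} s_k,
\]
where the sum ranges over all subsets $S$ and each coefficient $c_S$ is nonnegative (positive whenever nonzero) by positivity.

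The crux is to show that symmetry forces $c_S$ to depend on $S$ only through its cardinality $|S|$. Since the squarefree monomials $\prod_{k\in S}s_k$ are linearly independent as functions on $[0,\infty)^K$, the coefficients $c_S$ are uniquely determined by $F$. For a permutation $\pi$ of $\{1,\dots,K\}$, the polynomial $F\circ\pi$ has monomial expansion with coefficients $c_{\pi^{-1}(S)}$; the assumed invariance $F=F\circ\pi$ together with uniqueness of the representation then gives $c_S=c_{\pi(S)}$ for all $\pi$ and all $S$. Hence $c_S=a_{|S|}$ for constants $a_0,\dots,a_K\ge0$, and grouping the monomials by their size yields $F=\sum_{n=0}^K a_n\sigma_n$.

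To finish, I would convert to NESPs and normalize. Because $\sigma_n=\binom{K}{n}U_n$, setting $\lambda_n:=a_n\binom{K}{n}\ge0$ gives $F=\sum_{n=0}^K\lambda_n U_n$. Evaluating at $(1,\dots,1)$ and using $U_n(1,\dots,1)=1$ for every $n$ (the normalization built into the $U_n$), the normalization $F(1,\dots,1)=1$ becomes $\sum_{n=0}^K\lambda_n=1$. Thus $(\lambda_n)$ is a probability vector and $F$ is the asserted convex mixture of the $U_n$.

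Given Proposition~\ref{prop:main}, the argument is essentially bookkeeping, and the only step needing genuine care is the transfer of \emph{functional} symmetry of $F$ to symmetry of its coefficients. This rests on uniqueness of the multiaffine representation, i.e.\ on linear independence of the squarefree monomials over $[0,\infty)^K$; I would justify it by noting that a nonzero multiaffine polynomial cannot vanish on the whole orthant (for instance, it is already determined by its values on the $2^K$ vertices of $\{0,1\}^K$), so coefficients are uniquely read off and the permutation argument applies.
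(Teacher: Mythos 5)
Your proof is correct, but it follows a different route from the one the paper actually writes out. The paper's stated proof reruns the argument of the proof of Proposition~\ref{prop:main}, replacing Gallier's Lemma~4.1.3 by Lemma~4.1.4 (a symmetric version of the multiaffine decomposition), and only mentions in passing that one could ``alternatively'' derive the statement from Proposition~\ref{prop:main}; you carry out precisely that alternative. Your key step --- transferring the functional symmetry of $F$ to symmetry of its coefficients $c_S$ via uniqueness of the squarefree-monomial expansion --- is sound: a multiaffine polynomial is determined by its values on $\{0,1\}^K$ (the evaluation matrix $[\![S\subseteq T]\!]$ is unitriangular under any order refining inclusion), so the coefficients are well defined, the permutation action sends $c_S$ to $c_{\pi^{-1}(S)}$, and invariance forces $c_S=a_{|S|}$. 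The remaining bookkeeping ($\sigma_n=\binom{K}{n}U_n$, nonnegativity from positivity, and $\sum_n\lambda_n=1$ from normalization at $(1,\dots,1)$) is exactly right. What your route buys is self-containedness: it needs no second citation to Gallier and makes the reduction from Proposition~\ref{prop:main} completely explicit, which the paper leaves to the reader. What the paper's route buys is brevity, at the cost of leaning on an external lemma about symmetric multiaffine functions.
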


In Proposition~\ref{prop:symmetric}, $U_0$ is understood to be $1$.
For proofs of Propositions~\ref{prop:main} and \ref{prop:symmetric},
see Appendix~\ref{app:proofs}.
From now on we will consider symmetric martingale merging functions.

\section{Family-wise multiple testing}
\label{sec:diagonal}

We are given $K$ adapted sequences $S^{(k)}=(S^{(k)}_1,S^{(k)}_2,\dots)$,
$k=1,\dots,K$,
of random variables taking values in $[0,\infty]$
and a predictable sequence $k_1,k_2,\dots$ of random variables taking values in $\{1,\dots,K\}$
such that $S^{(k)}_n=S^{(k)}_{n-1}$ whenever $k_n\ne k$.
Let us say that $k\in\{1,\dots,K\}$ is an \emph{anomalous index} for $Q\in\QQQ$
if $S^{(k)}$ is not a test martingale w.r.\ to $Q$
(with $S^{(k)}_0$ understood to be 1).

The interpretation is that at each step $n$ we are testing a null hypothesis $H_k\subseteq\QQQ$,
which leads to a change in $S^{(k)}_n$.
There are $K$ null hypotheses $H_1,\dots,H_K$,
and at step $n$ we are testing $H_{k_n}$.
The process of gambling is fair, in the sense of leading to a test martingale $S^{(k)}$,
under each $Q\in H_k$.
However, it does not have to be a test martingale
under the true probability  measure $\P$.
(A more realistic picture arises when we replace ``test martingale'' by ``e-process'',
i.e., a process dominated by a test martingale,
but let us concentrate on the simpler case of test martingales
in this paper.)

After observing the values of $S^{(k)}$ over steps $1,\dots,n$,
we might come up with a \emph{rejection set} $R\subseteq\{1,\dots,K\}$
containing the indices of the hypotheses that we decide to reject at step $n$.
It is natural to include in $R$ the indices $k$ with the largest values of $S^{(k)}_n$.
The elements of $R$ are \emph{discoveries}.
A discovery $k\in R$ is a \emph{true discovery} if $\P\notin H_k$,
and it is a \emph{false discovery} if $\P\in H_k$.
Let us also say that $k\in R$ is a \emph{justified discovery} if $k$ is an anomalous index.
Every justified discovery is a true discovery.
(The notion of a justified discovery is simpler than that of a true discovery
in that it does not involve the null hypotheses $H_k$.)

In this section we are interested in the necessity of all $k\in R$ being justified discoveries.
This number is a lower bound on the necessity of all $k\in R$ being true discoveries.
In other words, we are interested in conclusions that are family-wise valid.

For each $Q\in\QQQ$, let
\[
  J_Q
  :=
  \left\{
    k\in\{1,\dots,K\}
    \mid
    \text{$S^{(k)}$ is a test martingale under $Q$}
  \right\}.
\]
We are interested in the necessity of the property
\begin{equation}\label{eq:R-strict}
  R
  \cap
  J_Q
  =
  \emptyset
\end{equation}
that all discoveries in $R$ are justified.

The most natural martingale test in our current context
is obtained by applying a martingale merging function $F$
to the test martingales among the $S^{(k)}$.
In other words, for a given $Q$,
$F$ should be applied to $S^{(k)}$ for $k\in J_Q$.
Let us fix $F$.
Notice that we need $F$ for any number of arguments from 1 to $K$,
so formally we need a family $(F_k)_{k=1}^K$ of martingale merging functions.
We abbreviate $F_k(\dots)$ to $F(\dots)$
since $k$ is determined by the number of arguments and so redundant.

The optimal discovery sets at time $n$
are $R_{r,n}$, $r=1,\dots,K$,
where each $R_{r,n}\subseteq\{1,\dots,K\}$ has size $r$
and consists of the indices of the $r$ largest values in the set of $S^{(k)}_n$, $k=1,\dots,K$;
in the case of ties, let us give preference to smaller~$k$.
Define the \emph{chronological discovery diagonal} by
\begin{equation}\label{eq:CDD}
  \begin{aligned}
    \eN_n(R_{r,n}\cap J_Q=\emptyset)
    &=
    \inf_{Q\in\QQQ:R_{r,n}\cap J_Q\ne\emptyset}
    F
    \left(
      \left(
        S^{(k)}_n
      \right)_{k\in J_Q}
    \right)\\
    &\ge
    \inf_{I\subseteq\{1,\dots,K\}:R_{r,n}\cap I\ne\emptyset}
    F
    \left(
      \left(
        S^{(i)}_n
      \right)_{i\in I}
    \right)
    =:
    d_{r,n}.
  \end{aligned}
\end{equation}
(I will explain the origin of the term ``diagonal'' in Sect.~\ref{sec:DM}.)

\begin{algorithm}[bt]
  \caption{Chronological discovery diagonal $d_{r,n}$}
  \label{alg:CDD}
  \begin{algorithmic}[1]
    \Require
      symmetric martingale merging functions $F_k$, $k\in\{1,\dots,K\}$.
    \Require
      decreasing sequence of martingale values $S^1\ge\dots\ge S^K$.
    \For{$n=1,2,\dots$}
      \For{$r=1,\dots,K$}
        \State $d_{r,n}:=F((S^r))$
        \For{$k=r+1,\dots,K$}
          \State $S := F((S^i)_{i\in\{r\}\cup\{k,\dots,K\}})$
          \If{$S < d_{r,n}$}
	    \State $d_{r,n} := S$
          \EndIf
        \EndFor
      \EndFor
    \EndFor
  \end{algorithmic}
\end{algorithm}

Algorithm~\ref{eq:CDD} spells out the computation
of the infinite $K\times\infty$ matrix $d_{r,n}$,
although in our simulation studies we will only plot
paths $n\mapsto d_{r,n}$ for a few fixed $r$.
The algorithm assumes that the final martingale values
$S^{(k)}_n$ are sorted in the descending order,
and the sorted values are denoted $S^1\ge\dots\ge S^K$.
One of its inputs is a family of martingale merging functions $F_k$,
but as before, $F_k(\dots)$ is abbreviated to $F(\dots)$.

\begin{figure}
  \begin{center}
    \includegraphics[width=0.49\textwidth]{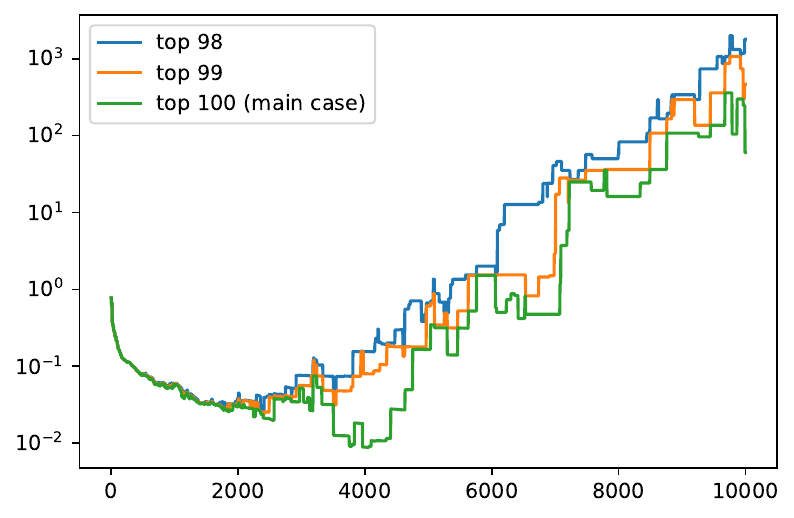}
    \includegraphics[width=0.49\textwidth]{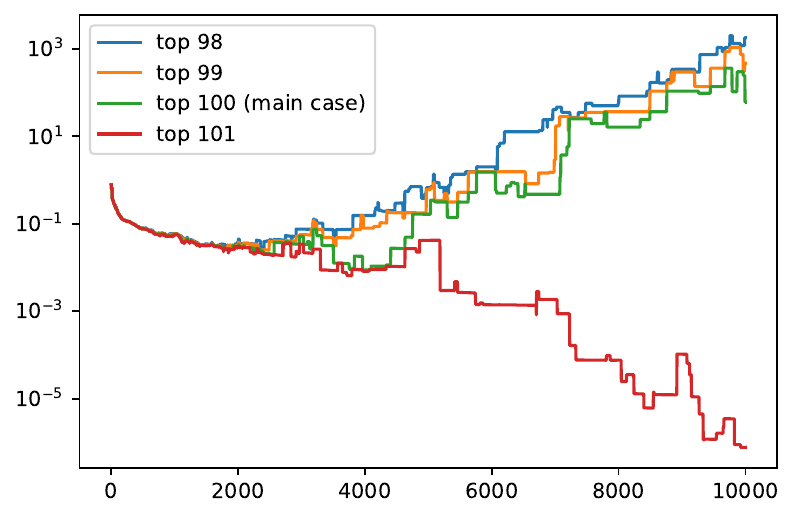}
  \end{center}
  \caption{Discovery plots for 100, 99, and 98 hypotheses.
    The right panel also adds the case of 101 hypotheses
    (at least one of which is bound to be wrong).}
  \label{fig:plot_diagonal}
\end{figure}

In our simulation studies we have 200 null hypotheses, all of them being $N(0,1)$,
numbered from 1 to 200.
The first 100 null hypotheses are false, and the true distribution is $N(-1,1)$;
and the remaining 100 null hypotheses are true.
At each step, from 1 to 10\,000,
we choose the hypothesis being tested randomly with equal probabilities,
so that each hypothesis is chosen with probability $0.5\%$.
Figure~\ref{fig:plot_diagonal} gives the plots $n\mapsto d_{r,n}$
for $r:=100$ (meaning that we aim to discover all 100 false null hypotheses),
$r:=99$, and $r:=98$.
Let us call such plots \emph{discovery plots}.
We generate the 10\,000 observations randomly
with the standard seed of 42 for the random number generator
(in fact, the results are very sensitive to the chosen value for the seed).
The final value $d_{100,10000}$ of the discovery plot
for the top 100 martingale values ($r:=100$) is approximately 60.1.
Using Jeffreys's \cite[Appendix B]{Jeffreys:1961} expression,
there is very strong evidence that the top 100 martingale values
exactly pinpoint the 100 false null hypotheses.

The martingale merging function used in Fig.~\ref{fig:plot_diagonal} is $U_1$.
It is clear that any symmetric martingale merging function,
which is a convex mixture of $U_n$ (Proposition~\ref{prop:main} above),
that does not have $U_1$ as its component,
will produce very poor results for all discovery plots
shown in Fig.~\ref{fig:plot_diagonal}:
e.g., $U_2(S^{100},S^{200})$ will be very small
(approximately $7.80\times10^{-25}$ in our case),
and $U_2(S^{100},S^k,\dots,S^{200})$, $k=101,\dots,199$, will be even smaller.

While Fig.~\ref{fig:plot_diagonal} uses the $U_1$ martingale merging function,
using, e.g., $(U_1+U_2)/2$ would give similar results.

\section{Almost family-wise multiple testing}

Let us now relax the requirement \eqref{eq:R-strict} to
\[
  \left|
    R
    \cap
    J_Q
  \right|
  \le
  1.
\]
This requirement can be interpreted as almost all discoveries in $R$ being justified:
we are allowing only one exception.
The chronological discovery diagonal~\eqref{eq:CDD}
now becomes the \emph{chronological discovery subdiagonal}
\begin{equation*}
  \begin{aligned}
    \eN_n(\left|R_{r,n}\cap J_Q\right|\le1)
    &=
    \inf_{Q\in\QQQ:\left|R_{r,n}\cap J_Q\right|>1}
    F
    \left(
      \left(
        S^{(k)}_n
      \right)_{k\in J_Q}
    \right)\\
    &\ge
    \inf_{I\subseteq\{1,\dots,K\}:\left|R_{r,n}\cap I\right|>1}
    F
    \left(
      \left(
        S^{(i)}_n
      \right)_{i\in I}
    \right)
    =:
    d'_{r,n}.
  \end{aligned}
\end{equation*}

\begin{algorithm}[bt]
  \caption{Chronological discovery subdiagonal $d'_{r,n}$}
  \label{alg:CDS}
  \begin{algorithmic}[1]
    \Require
      symmetric martingale merging functions $F_k$, $k\in\{1,\dots,K\}$.
    \Require
      decreasing sequence of martingale values $S^1\ge\dots\ge S^K$.
    \For{$n=1,2,\dots$}
      \For{$r=1,\dots,K$}
        \If{$r=1$}
          \State $I_r:=\{r\}$
        \Else
          \State $I_r:=\{r-1,r\}$
        \EndIf
          \State $d'_{r,n}:=F(I_r)$
        \For{$k=r+1,\dots,K$}
          \State $S := F((S^i)_{i\in I_r\cup\{k,\dots,K\}})$
          \If{$S < d'_{r,n}$}
	    \State $d'_{r,n} := S$
          \EndIf
        \EndFor
      \EndFor
    \EndFor
  \end{algorithmic}
\end{algorithm}

The analogue of Algorithm~\ref{alg:CDD} for the chronological discovery subdiagonal
is given as Algorithm~\ref{alg:CDS}.
In our simulation study we apply it to the martingale merging function $U_2$.
One complication is that it sometimes has to be applied to sequences of length 1,
in which case we understand it to be the same as $U_1$.

\begin{figure}
  \begin{center}
    \includegraphics[width=0.7\textwidth]{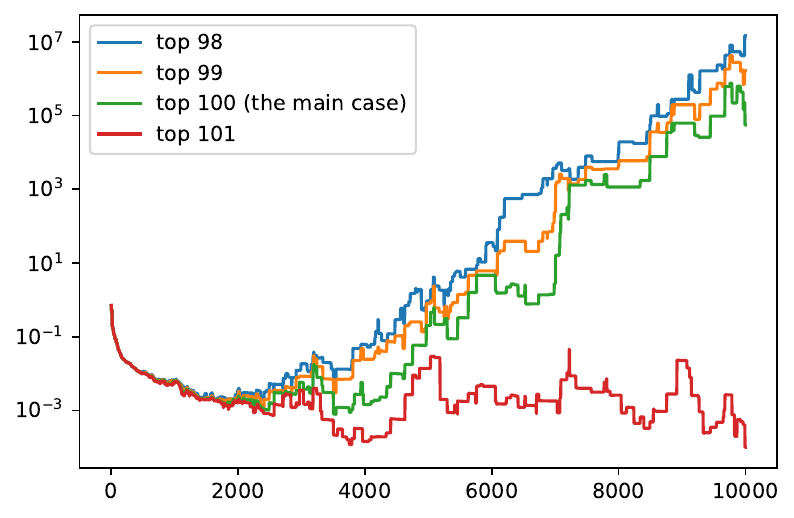}
  \end{center}
  \caption{Discovery plots for 101, 100, 99, and 98 hypotheses when one exception is allowed
    (almost family-wise case).}
  \label{fig:plot_subdiagonal}
\end{figure}

Figure~\ref{fig:plot_subdiagonal} is analogous to Fig.~\ref{fig:plot_diagonal}
but allows one exception and uses $U_2$ as martingale merging function.
In this case using $U_2$ works much better than $U_1$.
The final value of the discovery plot
for the top 100 martingale values ($r=100$) is approximately $1.07\times10^8$.

\section{Dynamic confidence regions}

Necessity measures discussed in Sect.~\ref{sec:necessity}
are only one possible way to package the idea of necessity.
A much more standard way is to use confidence regions,
which we will define in this section in our dynamic setting;
our definitions will be natural modifications of the standard definitions
(in the case of p-values)
and the definitions given in \cite{\eII,\eIV}
(in the case of e-values).
Let $(S^Q)$ be a martingale test,
fixed throughout this section.

We will be interested in confidence regions for the \emph{parameter} $g(\P)$,
where $g:\QQQ\to\Theta$ is a mapping from the probability measures on the sample space
to the parameter space $\Theta$ (which can be any set).
The \emph{exact confidence region} for $g(\P)$ at time $n$
corresponding to the realized outcome $\omega$
and significance level $\alpha>0$ is defined as
\[
  \Gamma^g_{\alpha,n}(\omega)
  :=
  \{g(Q)\mid S_n^Q(\omega)<\alpha\};
\]
as usual, the dependence on $\omega$ is often suppressed.
A \emph{confidence region} is a set of parameter values
containing the exact confidence region.

Alternatively, we can define a confidence region
as a set $A\subseteq\Theta$ such that
\begin{equation}\label{eq:inequality}
  \eN_n(\{Q\mid g(Q)\in A\})\ge\alpha.
\end{equation}
The exact confidence region $\Gamma^g_{\alpha,n}$ is the smallest such set.
In other words, $A:=\Gamma^g_{\alpha,n}$ satisfies \eqref{eq:inequality},
and any $A$ satisfying \eqref{eq:inequality} contains $\Gamma^g_{\alpha,n}$,
$A\supseteq\Gamma^g_{\alpha,n}$.

Finally, we can define the exact confidence region $\Gamma^g_{\alpha,n}$
as the set of all $\theta\in\Theta$ satisfying $\eP_n(g^{-1}(\theta))<\alpha$.

One disadvantage of the dynamic notion of exact confidence regions $\Gamma^g_{\alpha,n}$
is that, as a function of $n$, $\Gamma^g_{\alpha,n}$ is not decreasing:
we are not guaranteed to have $\Gamma^g_{\alpha,n+1}\subseteq\Gamma^g_{\alpha,n}$.
This phenomenon of ``losing evidence'' and ways of partially preventing it
are discussed in \cite{Dawid/etal:2011,Shafer/etal:2011} and \cite[Chap.~11]{Shafer/Vovk:2019}.

It is essential to have the martingale test $(S^Q)$ fixed in advance
in order to have valid confidence regions;
on the other hand, confidence regions corresponding to different $g$
are valid simultaneously.

\section{Multiple testing \emph{en masse}}
\label{sec:DM}

In this section we define, for each rejection set $R\subseteq\{1,\dots,K\}$,
a confidence region for the number of justified discoveries in $R$
(i.e., anomalous $k\in R$).
Such confidence regions are often of the form $\{L,\dots,K\}$
for some lower bound $L$
(we only have a lower confidence bound
since $S^{(k)}$ can be arbitrarily close to being a test martingale
without being one).

Given a rejection set $R$, we are interested in the parameter
\begin{equation*}
  g_R(Q)
  :=
  \left|
    R \setminus J_Q
  \right|,
\end{equation*}
which is the number of justified discoveries.
While in this paper we concentrate on the parameter function $g_R$,
this function can be generalized in various directions;
see, e.g., \cite[Remark~6.1]{\eIII}.

The confidence region for $g_R(\P)$ at time $n$ at significance level $\alpha$
consists of $j\in\{1,\dots,K\}$ satisfying $\eP_n(g_R^{-1}(j))<\alpha$,
where the possibility measure $\eP_n(g_R^{-1}(j))$ is
\begin{equation}\label{eq:DM}
  \begin{aligned}
    \eP_n(g_R^{-1}(j))
    &=
    \min_{Q\in\QQQ:g_R(Q)=j}
    S_n^Q
    =
    \inf_{Q\in\QQQ:\left|R\setminus J_Q\right|=j}
    F\left(\left(S^{(k)}_n\right)_{k\in J_Q}\right)\\
    &\ge
    \min_{I\subseteq\{1,\dots,K\}:\left|R\setminus I\right|=j}
    F\left(\left(S^{(i)}_n\right)_{i\in I}\right)
    =:
    D^R(j).
  \end{aligned}
\end{equation}
Replacing $\eP_n(g_R^{-1}(j))$ by $D^R_j$ we also obtain a valid
(perhaps conservative) confidence region.
In the case of the optimal $R:=R_{r,n}$,
we refer to
\[
  D_{r,j}
  :=
  D^{R_{r,n}}(j)
\]
as the \emph{discovery matrix} at time $n$.
It is a lower triangular matrix with $r\in\{1,\dots,K\}$ and $j\in\{0,\dots,r\}$.
In the case $j=r$, the range of $I$ includes the empty set $\emptyset$,
and in this case we set $F$ to 1.

In the computational experiments reported in this paper,
the discovery matrix $D_{r,j}$ is always monotonically decreasing in $j$,
and so
\[
  \eP_n(g_R^{-1}(j))
  =
  \eP_n(g_R^{-1}(\{0,\dots,j\})).
\]
This is essential for the interpretation of our results.
However, in general, the discovery matrix $D_{r,j}$ is not guaranteed
to be decreasing in $j$ \cite{\eIII,\eIV},
and so might need to be regularized
by redefining $D_{r,j}:=\min_{j'\le j}D_{r,j'}$,

\begin{algorithm}[bt]
  \caption{Discovery matrix (lower triangular) $D_{r,j}$}
  \label{alg:DM}
  \begin{algorithmic}[1]
    \Require
      symmetric martingale merging functions $F_k$, $k\in\{1,\dots,K\}$.
    \Require
      decreasing sequence of final martingale values $S^1\ge\dots\ge S^K$.
    \For{$r=1,\dots,K$}
      \For{$j=0,\dots,r$}
        \State $I_{r,j}:=\{j+1,\dots,r\}$
        \State $D_{r,j}:=F((S^i)_{i\in I_{r,j}})$
        \For{$k=r+1,\dots,K$}
          \State $e := F((S^i)_{i\in I_{r,j}\cup\{k,\dots,K\}})$
          \If{$e < D_{r,j}$}
	    \State $D_{r,j} := e$
          \EndIf
        \EndFor
      \EndFor
    \EndFor
  \end{algorithmic}
\end{algorithm}

Algorithm~\ref{alg:DM} implements~\eqref{eq:DM}.
In the case $j=r$ we have $I_{r,j}=\emptyset$,
and as discussed earlier, we set $F(\emptyset):=1$.
This algorithm computes the discovery matrix in time $O(K^4)$
when $F$ is a fixed $U_n$ or a fixed convex mixture of the first few $U_n$;
this follows from $F$ being computable in time $O(K)$,
which in turn follows from, e.g., Newton's identities (see Appendix~\ref{app:Newton}).
It is interesting that for the simulation studies reported in this paper
we do not need more efficient algorithms
such as the $O(K^3)$ algorithm given in \cite{\eIV}
and, in the case of $U_1$, the $O(K^2)$ algorithm given in \cite{\eIII};
computations take at most a couple of minutes on an ordinary laptop.

\begin{figure}
  \begin{center}
    \includegraphics[width=0.49\textwidth]{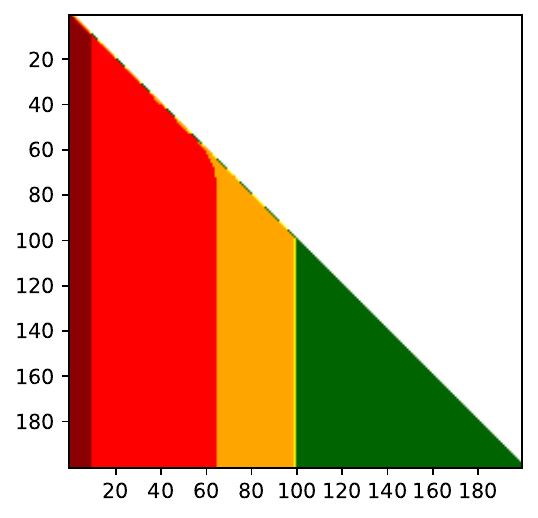}
    \includegraphics[width=0.49\textwidth]{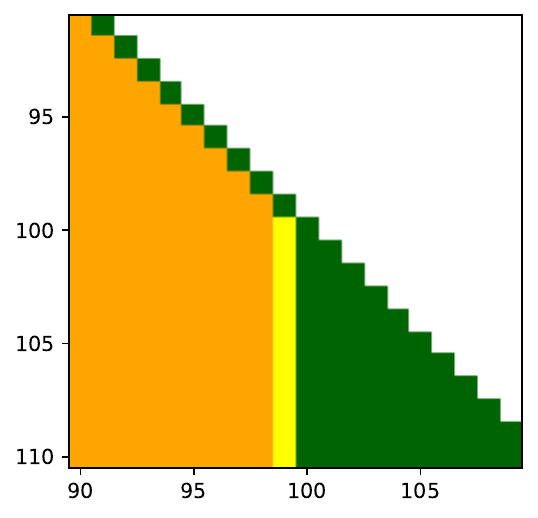}
  \end{center}
  \caption{Left panel:
    Discovery matrix for the mean $U_1$ as martingale merging function.
    The right panel shows its middle portion.}
  \label{fig:DM_mean}
\end{figure}

\begin{figure}
  \begin{center}
    \includegraphics[width=0.49\textwidth]{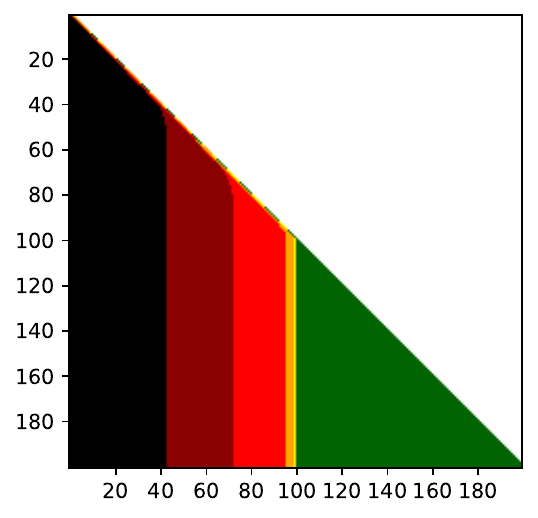}
    \includegraphics[width=0.49\textwidth]{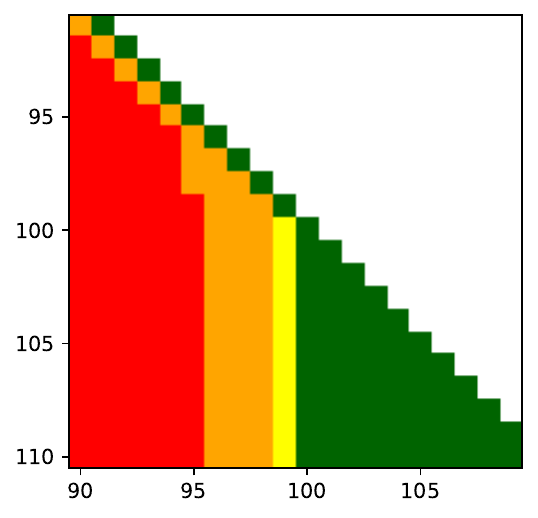}
  \end{center}
  \caption{Left panel:
    Discovery matrix for the mixture $(U_1+U_2)/2$ as martingale merging function.
    The right panel shows its middle portion.}
  \label{fig:DM_mix}
\end{figure}

See Figs~\ref{fig:DM_mean} and~\ref{fig:DM_mix} for discovery matrices
at time 10\,000 for the same simulated data as before.
The colour coding used in both figures
involves much more extreme values of the possibility measures
than the usual scheme used in \cite{\eIII,\eIV}
(the scheme in \cite{\eIII,\eIV} uses the thresholds
proposed by Jeffreys \cite[Appendix~B]{Jeffreys:1961}):
\begin{itemize}
\item
  The final martingale values below $10$ are shown in green.
  For such cells (in row $r$ and column $j$) we have $D_{r,j}<10$,
  and these are exactly the cells for which we do not have strong evidence
  for there being at least $j$ justified discoveries.
\item
  The final martingale values between $10$ and $100$ are shown in yellow.
  These are exactly the cells for which we have strong but not decisive  evidence
  for there being at least $j$ justified discoveries
  ($10\le D_{r,j}<100$).
\item
  The final martingale values between $100$ and $10^8$ are shown in orange.
  For these cells (and for the cells in the darker colours)
  we have decisive  evidence for there being at least $j$ justified discoveries.
\item
  The final martingale values between $10^8$ and $10^{14}$
  are shown in red.
\item
  The final martingale values between $10^{14}$ and $10^{20}$
  are shown in dark red.
\item
  The final martingale values above $10^{20}$ are shown in black.
\end{itemize}

The \emph{diagonal} of a discovery matrix consists of the cells $D_{r,r-1}$,
the justification for the offset of 1 being that $j$ in $D_{r,j}$ starts from 0 rather than 1.
Correspondingly, the \emph{subdiagonal} consists of $D_{r,r-2}$,
and the \emph{superdiagonal} consists of $D_{r,r}$.

The diagonal, subdiagonal, and superdiagonal can be clearly seen
in the right panels of the two figures.
The diagonal can be traced starting from the top left corner of either bounding box.
In both figures the diagonal (when moving south-east)
is first orange, then yellow (just one cell), and then green.
In Fig.~\ref{fig:DM_mean} the subdiagonal
is also first orange, then yellow (just one cell), and then green,
but in Fig.~\ref{fig:DM_mix} the subdiagonal is first red
and only later becomes orange.
The superdiagonal is green in both figures.

The corresponding confidence intervals
(i.e., confidence regions that happen to be intervals)
can be read off the two figures.
For example, for each row $r$,
the non-green cells represent the confidence interval
for the number of justified discoveries among the top $r$ martingale values
at significance level 10.
We can see that for $r=100$ the confidence interval is $\{100\}$;
it is degenerate and contains only one value:
we are predicting that all 100 null hypotheses with the largest final martingale values
are justified (and \emph{a fortiori} true) discoveries,
and we have strong evidence for that.
On the other hand, the green superdiagonal entry $D_{100,100}$ is very small
(it is, approximately, $1.13\times10^{-20}$).
If we raise the significance level to 100
(Jeffreys's threshold for decisive evidence),
the confidence interval widens to $\{99,100\}$.
And when we raise it further to the huge value of $10^8$,
the confidence interval (given by the non-red entries in the right panel)
widens to $\{96,97,98,99,100\}$.

Figures~\ref{fig:DM_mean} and~\ref{fig:DM_mix}
suggest that discovery matrices are monotonically decreasing
in the eastern and south-eastern directions
and monotonically increasing in the southern direction.
These properties of monotonicity
(except for the monotonicity in $j$ discussed earlier)
are stated and proved in \cite{\eIII} and \cite{\eIV}.

Figures~\ref{fig:plot_diagonal} and~\ref{fig:plot_subdiagonal}
show the evolution of various entries of discovery matrices
such as those in Figs~\ref{fig:DM_mean} and~\ref{fig:DM_mix}
over time.
The green lines in both panels of Fig.~\ref{fig:plot_diagonal}
show the evolution of the diagonal entry $D_{100,99}$
over the 10\,000 observations.
The orange and blue lines in Fig.~\ref{fig:plot_diagonal}
show the evolution of the entries $D_{99,98}$ and $D_{98,97}$, respectively.
All these entries lie on the diagonal
\[
  d_{r,10000}
  :=
  D_{r,r-1}
\]
of the discovery matrix at time 10\,000.
We talked about family-wise validity in Sect.~\ref{sec:diagonal}
since $D_{r,r-1}$ is the largest significance level
at which the confidence interval is a one-element set,
namely $\{r\}$.

The right panel of Fig.~\ref{fig:plot_diagonal} also shows,
as red line, the evolution of the diagonal entry $D_{101,100}$.
This line shows that the green entry $D_{101,100}$ in Fig.~\ref{fig:DM_mean}
is very small;
in numbers, the final value of the red line in Fig.~\ref{fig:plot_diagonal}
(i.e., $D_{101,100}$ in Fig.~\ref{fig:DM_mean})
is, approximately, $7.80\times10^{-7}$).
The value of $D_{101,100}$ in Fig.~\ref{fig:DM_mix} is even smaller.

The green line in Fig.~\ref{fig:plot_subdiagonal}
shows the evolution of the entry $D_{100,98}$,
which determines the significance levels at which
the confidence interval for the number of justified discoveries
is $\{99,100\}$ (allowing one unjustified discovery).
Its final value corresponds to the entry $D_{100,98}$ in Fig.~\ref{fig:DM_mix},
and the two numbers have the same order of magnitude
(they are, however, different
because Figs~\ref{fig:plot_subdiagonal} and~\ref{fig:DM_mix}
use different martingale merging functions, $U_2$ vs $(U_1+U_2)/2$).
The orange and blue lines in Fig.~\ref{fig:plot_subdiagonal}
are interpreted in the same way;
they correspond to the entries $D_{99,97}$ and $D_{98,96}$, respectively,
of Fig.~\ref{fig:DM_mix}.
The red line in Fig.~\ref{fig:plot_subdiagonal}, however,
disagrees sharply with the entry $D_{101,99}$ of Fig.~\ref{fig:DM_mix},
because the martingale merging function $(U_1+U_2)/2$ used in Fig.~\ref{fig:DM_mix}
has $U_1$ as its component.

\section{Conclusion}

These are some possible directions of further research:
\begin{itemize}
\item
  The motivation behind this paper is coming from game-theoretic probability and statistics,
  but its mathematical setting is that of measure-theoretic probability.
  Replacing measure-theoretic probability by purely game-theoretic probability
  (as developed in \cite{Shafer/Vovk:2019})
  would simplify the exposition and lead to more natural and general definitions.
\item
  This paper concentrates on simulation studies.
  It would be interesting to conduct empirical studies
  on benchmark or real-world datasets,
  for example ones collected in the course of statistical meta-analyses.
\item
  The experimental results of Sect.~\ref{sec:DM} establish confidence regions
  for the numbers of true discoveries,
  which can be restated as results about the false discovery proportions, FDP.
  Are there any interesting theoretical results in this context
  about false discovery rates, FDR
  (as in \cite{Benjamini/Hochberg:1995} in the case of p-values
  and \cite{Wang/Ramdas:2022} in the case of e-values)?
\item
  This paper concentrates on the closed setting
  (when the number of null hypotheses $K$ is given in advance).
  The open setting, where new hypotheses may appear at any moment,
  may be even more interesting.
  In this case we need, of course, to break the symmetry between the null hypotheses:
  there is no uniform probability measure on $\{1,2,\dots\}$.
\end{itemize}

\subsection*{Acknowledgments}

Many thanks to Jean Gallier for his advice on literature
and for correcting the statement of Lemma~4.1.3 in \cite{Gallier:2024}.
My research has been partially supported by Mitie.

\appendix

\section{Proofs}
\label{app:proofs}

In this appendix I will prove Propositions~\ref{prop:main} and~\ref{prop:symmetric}
(and state and prove a new Proposition~\ref{prop:domination}).
A \emph{multiaffine function} is a multivariate function
that is affine in each of its arguments.
(So that multiaffine polynomials are multiaffine functions,
and in Sect.~\ref{subsec:proof-main} we will see that these two notions are equivalent.)
The proofs will follow from the following lemma.

\begin{lemma}\label{lem:multiaffine}
  A martingale merging function must be multiaffine.
\end{lemma}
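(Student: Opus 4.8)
The plan is to prove that $F$ is affine in each coordinate separately, which is exactly the assertion that $F$ is multiaffine. So fix an index $k\in\{1,\dots,K\}$ and fix arbitrary nonnegative values $s_j$ for all $j\ne k$; the goal is to show that the single-variable function $g(t):=F(s_1,\dots,s_{k-1},t,s_{k+1},\dots,s_K)$ is affine on $[0,\infty)$. The only tool available is the defining property of a martingale merging function, so the entire argument reduces to manufacturing test martingales whose combined process, being forced to be a test martingale, pins down the shape of $g$.

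First I would set up a two-step construction on a small finite probability space with filtration $\FFF_0\subseteq\FFF_1\subseteq\FFF_2$ ($\FFF_0$ trivial). The first step is used purely to position the base martingales: on an event $E$ of small positive probability I arrange $S^{(j)}_1=s_j$ for $j\ne k$ and $S^{(k)}_1=c$, where $c:=\lambda a+(1-\lambda)b$ for arbitrary targets $a,b\ge0$ and weight $\lambda\in(0,1)$. To keep each $S^{(j)}$ a genuine test martingale, on $E^{\complement}$ I set $S^{(j)}_1$ to the unique value $(1-\P(E)s_j)/(1-\P(E))$ making $\E(S^{(j)}_1)=1$; nonnegativity of that value is guaranteed by taking $\P(E)$ small enough. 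The second step varies only the $k$-th coordinate: conditional on $E$, let $S^{(k)}_2$ equal $a$ with probability $\lambda$ and $b$ with probability $1-\lambda$, so that $\E(S^{(k)}_2\mid\FFF_1)=c$ on $E$, while all other $S^{(j)}_2=s_j$ stay frozen and, on $E^{\complement}$, every base martingale stays frozen. Each $S^{(j)}$ is then a valid test martingale.

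Next I would invoke the defining property: since $F$ is a martingale merging function, $F(S^{(1)}_n,\dots,S^{(K)}_n)$ is a test martingale, hence $\E\bigl(F(S^{(1)}_2,\dots,S^{(K)}_2)\mid\FFF_1\bigr)=F(S^{(1)}_1,\dots,S^{(K)}_1)$. Evaluated on $E$ this reads $\lambda\,g(a)+(1-\lambda)\,g(b)=g(c)=g(\lambda a+(1-\lambda)b)$. Because $a,b\ge0$ and $\lambda\in(0,1)$ are arbitrary, $g$ satisfies Jensen's identity with equality for every convex combination; writing $x=\lambda a+(1-\lambda)b$ and solving for $\lambda$ shows $g$ is affine on each interval $[a,b]$, hence globally affine on $[0,\infty)$ (no appeal to measurability is even needed, though $F$ is measurable by hypothesis). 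Since $k$ and the frozen coordinates $(s_j)_{j\ne k}$ were arbitrary, $F$ is affine in each argument, i.e.\ multiaffine.

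I expect the main obstacle to be the bookkeeping of the positioning step: checking that a single event $E$ can simultaneously drive all $K$ base martingales to the prescribed values while each remains a nonnegative test martingale. This forces the constraint $\P(E)<1/\max\bigl(c,\max_{j\ne k}s_j\bigr)$ (read as no constraint when all these values vanish), which is easily satisfiable. Everything after that is the short conditional-expectation computation and the elementary fact that Jensen's identity for all convex weights characterizes affine functions.
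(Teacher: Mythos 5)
Your overall strategy is the same as the paper's: build test martingales that sit at the prescribed values $(s_j)_{j\ne k}$ and at $c=\lambda a+(1-\lambda)b$ in coordinate $k$, then make only the $k$-th one branch to $a$ or $b$ with branching probability $\lambda$ at the final step, and read the affinity identity off the one-step martingale property of $F(S^{(1)},\dots,S^{(K)})$. That final step, and the conclusion from it, are exactly right.

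There is, however, one genuine defect in your positioning step. The definition of a martingale merging function in this paper must be read as quantifying over \emph{uncorrelated} test martingales --- i.e.\ families for which at most one $S^{(j)}$ changes at any given step, per a predictable index sequence $k_n$. (With the literal ``all test martingales'' reading, none of the products or the NESPs $U_n$ for $n\ge2$ would qualify, e.g.\ $\E(S_n^2\mid\FFF_{n-1})\ne S_{n-1}^2$ in general, so the paper's characterization would be vacuous; the paper's own proof of this lemma accordingly works with uncorrelated martingales.) In your first step you move \emph{all} $K$ martingales simultaneously on the event $E$ (each $S^{(j)}_1$ jumps from $1$ to $s_j$ on $E$ and to the compensating value $(1-\P(E)s_j)/(1-\P(E))$ on $E^{\complement}$), so the family you construct is not uncorrelated, and the defining property of $F$ cannot be invoked for it. The lemma you would actually be proving has a strictly stronger hypothesis on $F$ than the one the paper needs. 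The repair is routine: spread the positioning over $K$ separate steps, letting martingale $j$ alone branch at step $j$ (to $s_j$ on a sub-event, to a compensating value off it), take $E$ to be the positive-probability event on which all targets are hit, and only then perform your branching of $S^{(k)}$; this is in effect what the paper's construction on $\{0,1\}^N$ does. With that change, and keeping your nonnegativity constraint on $\P(E)$, the rest of your argument goes through verbatim.
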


\begin{proof}
  Let $F$ be a martingale merging function.
  We are required to prove
  \begin{multline}\label{eq:goal}
    F(s_1,\dots,s_{k-1},\alpha s'_k+(1-\alpha)s''_k,s_{k+1},\dots,s_K)\\
    =
    \alpha
    F(s_1,\dots,s_{k-1},s'_k,s_{k+1},\dots,s_K)\\
    +
    (1-\alpha)
    F(s_1,\dots,s_{k-1},s''_k,s_{k+1},\dots,s_K),
  \end{multline}
  where $\alpha\in(0,1)$.
  Let us fix $s_1,\dots,s_{k-1},s'_k,s''_k,s_{k+1},\dots,s_K$, and $\alpha$.
  Consider the sample space $\Omega:=\{0,1\}^N$ with the natural filtration
  and a positive probability measure $\P$
  (i.e., $\P(E)>0$ for any $E\ne\emptyset$).
  Suppose that the set of sample points where
  \begin{multline*}
    S^{(1)}_{N-1}=s_1,\dots,
    S^{(k-1)}_{N-1}=s_{k-1},
    S^{(k)}_{N-1}=\alpha s'_k+(1-\alpha)s''_k,\\
    S^{(k+1)}_{N-1}=s_{k+1},\dots,
    S^{(K)}_{N-1}=s_K
  \end{multline*}
  for some uncorrelated test martingales $S^{(1)},\dots,S^{(K)}$
  is non-empty,
  and let $(\omega_1,\dots,\omega_N)\in\Omega$ be such a sample point.
  Suppose that in our probability space we have the branching probability
  \[
    \frac
      {\P(\{(\omega_1,\dots,\omega_{N-1},1)\})}
      {\P(\{(\omega_1,\dots,\omega_{N-1},0),(\omega_1,\dots,\omega_{N-1},1)\})}
    =
    \alpha
  \]
  and that the martingale $S^{(k)}$ satisfies
  \begin{align*}
    S^{(k)}_N((\omega_1,\dots,\omega_{N-1},1))
    &=
    s'_k\\
    S^{(k)}_N((\omega_1,\dots,\omega_{N-1},0))
    &=
    s''_k.
  \end{align*}
  The existence of such a probability space
  and uncorrelated test martingales $S^{(1)},\dots,S^{(K)}$
  is obvious.
  Since
  \[
    T_n
    :=
    F(S^{(1)}_n,\dots,S^{(K)}_n)
  \]
  is a test martingale,
  we have
  \[
    T_{N-1}((\omega_1,\dots,\omega_{N}))
    =
    \alpha
    T_N((\omega_1,\dots,\omega_{N-1},1))
    +
    (1-\alpha)
    T_N((\omega_1,\dots,\omega_{N-1},0)),
  \]
  which is equivalent to \eqref{eq:goal}.
\end{proof}

\subsection{Proof of Proposition~\ref{prop:main}}
\label{subsec:proof-main}

To show that a martingale merging function is a multiaffine polynomial,
we combine Lemma~\ref{lem:multiaffine}
with Lemma~4.1.3 in \cite{Gallier:2024}
(whose proof relies on Cartan's method of successive differences
\cite[Sect.~6.3]{Cartan:1967}).
According to \cite[Lemma~4.1.3]{Gallier:2024},
a multiaffine function $f$ of $K$ arguments has the form
\begin{equation*}
  f(s_1,\dots,s_K)
  =
  f(0,\dots,0)
  +
  \sum_{\substack{n\in\{1,\dots,K\}\\
    \{1\le k_1\le\dots\le k_n\le K\}}}
  f_{k_1,\dots,k_n}(s_{k_1},\dots,s_{k_n}),
\end{equation*}
where $f_{k_1,\dots,k_n}$ are multilinear functions
(i.e., functions linear in each argument).
It remains to notice that 
\[
  f_{k_1,\dots,k_n}(s_{k_1},\dots,s_{k_n})
  =
  c
  s_{k_1} \dots s_{k_n}
\]
for some constant $c$;
indeed,
\begin{align*}
  f_{k_1,\dots,k_n}(s_{k_1},\dots,s_{k_n})
  &=
  s_{k_1}
  f_{k_1,\dots,k_n}(1,s_{k_2},\dots,s_{k_n})\\
  &=
  s_{k_1} s_{k_2}
  f_{k_1,\dots,k_n}(1,1,s_{k_3},\dots,s_{k_n})=\dots\\
  &=
  s_{k_1}\dots s_{k_n}
  f_{k_1,\dots,k_n}(1,\dots,1).
\end{align*}

Let us now check that a martingale merging function $F$
is a positive multiaffine polynomial.
Suppose there is a negative coefficient in front of one or more of its monomials.
Choose and fix a monomial with a negative coefficient.
Set all variables that do not occur in this monomial to zero.
Set each of the variables that do occur in this monomial to $C$ and let $C\to\infty$.
For a large enough $C$, the value of the polynomial
(the value being a univariate polynomial in $C$
with a negative leading coefficient)
will become negative, which is impossible.

There is a minor gap in our derivation of Proposition~\ref{prop:main}
from \cite[Lemma~4.1.3]{Gallier:2024}:
the latter assumes that the multiaffine function $f$
is defined on an affine space
whereas in our context $f$ is defined on $[0,\infty)^K$.
Let us check that every affine $f:[0,\infty)^K\to\R$
can be extended to an affine $f':\R^K\to\R$.
We proceed by induction and show that if $f(x_1,\dots,x_k,\dots,x_K)$
is an affine function with $x_k$ ranging over $[0,\infty)$
we can extend it to an affine function with $x_k$ ranging over $\R$
(with the ranges of the other arguments of $f$ unchanged).
Without loss of generality, let $k:=1$.
We extend $f$ to $f'$ by the affinity in $x_1$:
for any $x_1<0$,
\begin{equation}\label{eq:definition}
  f'(x_1,x_2,\dots)
  :=
  x_1
  f'(1,x_2,\dots)
  +
  (1-x_1)
  f'(0,x_2,\dots).
\end{equation}
We only need to check that $f'$ is multiaffine.
The affinity in $x_1$ holds by construction,
so we only need to check that $f'$ is affine in $x_k$ for $k\ne1$.
Without loss of generality, let $k:=2$.
Since the arguments $x_3,\dots,x_K$ of $f$ and $f'$ are kept fixed,
we will ignore them.
Our goal is to show that
\begin{equation}\label{eq:goal-1}
  f'
  \left(
    x_1,\alpha x'_2+(1-\alpha)x''_2
  \right)
  =
  \alpha
  f'
  \left(
    x_1,x'_2
  \right)
  +
  (1-\alpha)
  f'
  \left(
    x_1,x''_2
  \right)
\end{equation}
for $x_1<0$.
By the definition \eqref{eq:definition},
the equality \eqref{eq:goal-1} is equivalent to
\begin{multline}\label{eq:goal-2}
  x_1
  f
  \left(
    1,\alpha x'_2+(1-\alpha)x''_2
  \right)
  +
  (1-x_1)
  f
  \left(
    0,\alpha x'_2+(1-\alpha)x''_2
  \right)\\
  =
  \alpha
  x_1
  f
  \left(
    1,x'_2
  \right)
  +
  \alpha
  (1-x_1)
  f
  \left(
    0,x'_2
  \right)\\
  +
  (1-\alpha)
  x_1
  f
  \left(
    1,x''_2
  \right)
  +
  (1-\alpha)
  (1-x_1)
  f
  \left(
    0,x''_2
  \right).
\end{multline}
It remains to notice that \eqref{eq:goal-2} can be derived
as linear combination of
\begin{equation}\label{eq:component-1}
  f
  \left(
    1,\alpha x'_2+(1-\alpha)x''_2
  \right)
  =
  \alpha
  f
  \left(
    1,x'_2
  \right)
  +
  (1-\alpha)
  f
  \left(
    1,x''_2
  \right)
\end{equation}
and
\begin{equation}\label{eq:component-2}
  f
  \left(
    0,\alpha x'_2+(1-\alpha)x''_2
  \right)
  =
  \alpha
  f
  \left(
    0,x'_2
  \right)
  +
  (1-\alpha)
  f
  \left(
    0,x''_2
  \right)
\end{equation}
(with the coefficients $x_1$ for \eqref{eq:component-1}
and $1-x_1$ for \eqref{eq:component-2}).

\subsection{Proof of Proposition~\ref{prop:symmetric}}
\label{subsec:proof-symmetric}

We proceed as in Sect.~\ref{subsec:proof-main}
replacing Lemma~4.1.3 in \cite{Gallier:2024} by Lemma~4.1.4.
Alternatively, we could have derived Proposition~\ref{prop:symmetric}
from Proposition~\ref{prop:main}.

\subsection{Comparisons with merging independent and sequential e-values}
\label{subsec:merging}

In this subsection we will discuss ie-merging and se-merging functions,
to be defined momentarily;
for a further discussion of these functions see, e.g., \cite{\eV}.

\begin{figure}
  \begin{center}
    \includegraphics[trim={7.5cm 20cm 0 4.5cm},clip,width=\textwidth]{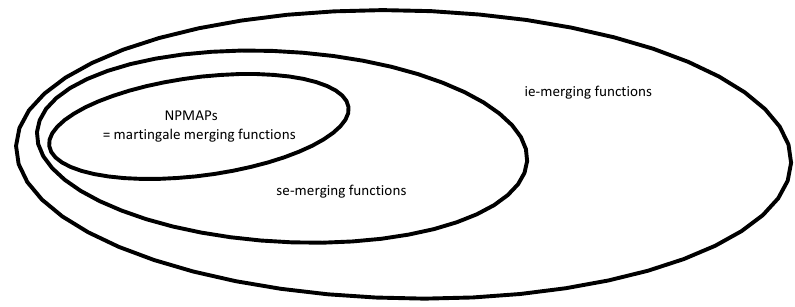}
  \end{center}
  \caption{Three families of merging functions;
    all inclusions in this Euler diagram are strict.}
  \label{fig:structure}
\end{figure}

Suppose $E_1,\dots,E_K$ are admissible independent e-variables
(i.e., nonnegative random variables that are independent and satisfy $\E(E_k)=1$, $k=1,\dots,K$)
or admissible sequential e-variables
(i.e., nonnegative, adapted, and satisfying $\E(E_k\mid\FFF_k-1)=1$, $k=1,\dots,K$).
Then
\[
  S^{(k)}_n
  :=
  \begin{cases}
    1 & \text{if $n<k$}\\
    E_k & \text{if $n\ge k$}
  \end{cases}
\]
are uncorrelated test martingales with final values $E_1,\dots,E_k$.
Therefore, any normalized positive multiaffine polynomial (NPMAP)
is an ie-merging function,
in the sense of mapping any (admissible) independent e-variables to an e-variable
(i.e., nonnegative random variable $E$ satisfying $\E(E)\le1$);
moreover, any NPMAP is an se-merging function,
in the sense of mapping any (admissible) sequential e-variables to an e-variable.
This gives us the structure shown in Fig.~\ref{fig:structure}:
it is obvious that every se-merging function is an ie-merging function.

Let us check that both inclusions in the Euler diagram
shown in Fig.~\ref{fig:structure} are strict.
The outer inclusion is strict since the function
\begin{equation}\label{eq:f}
  f(e_1,e_2)
  :=
  \frac12
  \left(
    \frac{e_1}{1 + e_1}
    +
    \frac{e_2}{1 + e_2}
  \right)
  \left(
    1 + e_1 e_2
  \right)
\end{equation}
is an admissible ie-merging function \cite[Remark~4.3]{\eII}
while it is not an se-merging function \cite[Example~2]{\eV}.
To see that the inner inclusion is strict,
notice that
\[
  (e_1,\dots,e_K)
  \mapsto
  1 + g(e_1,\dots,e_{K-1})
  (e_K-1)
\]
is an se-merging function for any function $g$ taking values in $[0,1]$,
even highly non-linear one, such as $g(e_1,\dots,e_{K-1}):=(\sin e_1+1)/2$.

\begin{figure}
  \begin{center}
    \includegraphics[trim={7.5cm 20cm 0 4.5cm},clip,width=\textwidth]{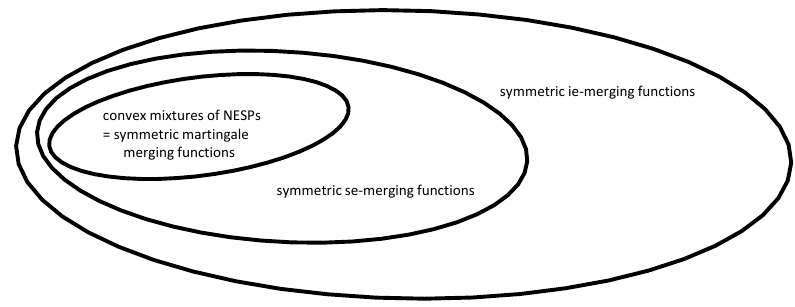}
  \end{center}
  \caption{Three families of symmetric merging functions.
    All inclusions in this Euler diagram are strict,
    but each symmetric se-merging function
    is dominated by a convex mixture of NESPs.}
  \label{fig:structure-symmetric}
\end{figure}

Specializing Fig.~\ref{fig:structure} to symmetric merging functions
we obtain Fig.~\ref{fig:structure-symmetric}.
The function \eqref{eq:f} is symmetric
and so can also serve as an example demonstrating
that the outer inclusion in Fig.~\ref{fig:structure-symmetric} is strict.
On the other hand, the following proposition shows
that the inner inclusion in Fig.~\ref{fig:structure-symmetric} is strict
in an uninteresting way.

\begin{proposition}\label{prop:domination}
  Every symmetric se-merging function
  is dominated by a convex mixture of NESPs.
\end{proposition}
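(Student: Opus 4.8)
The plan is to induct on the number of arguments $K$, using at each step the least affine majorant in one coordinate; the sequential structure of se-merging is exactly what turns the inductively produced ``intercept'' and ``slope'' into legitimate lower-dimensional merging functions.

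For the base case $K=1$, a symmetric se-merging function is just a measurable $f:[0,\infty)\to[0,\infty)$ with $\E(f(E))\le1$ for every nonnegative $E$ with $\E(E)=1$. Since the extreme points of the mean-$1$ distributions are two-point measures, LP duality gives
$\sup_{\mu:\int s\,\dd\mu=1}\int f\,\dd\mu=\inf\{a+b:\ a+bs\ge f(s)\ \forall s\ge0\}\le1$,
i.e.\ the worst-case expectation is the value at $s=1$ of the least affine majorant of $f$. Nonnegativity of $f$ forces $a,b\ge0$ in any such majorant, and after raising $a$ we may take $a+b=1$; then $a\,U_0+b\,U_1\ge f$ is the required convex mixture of NESPs. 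This affine-majorant trick is the engine of the whole argument.

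For the inductive step I would write $e=(e_{<K},s)$, set the intercept $p(e_{<K}):=f(e_{<K},0)$ and the slope $q(e_{<K}):=\sup_{s>0}\frac{f(e_{<K},s)-p(e_{<K})}{s}$ (and $0$ if this supremum is negative), so that $f(e_{<K},s)\le p(e_{<K})+q(e_{<K})\,s$ for all $s$. The one place where sequentiality (as opposed to mere independence, cf.\ \eqref{eq:f}) is essential is that the last e-variable may be chosen \emph{adaptively}: taking $E_K\mid\FFF_{K-1}$ to be the two-point variable on $\{0,t(E_{<K})\}$ with $\P(E_K=t)=1/t$ and the threshold $t(\cdot)$ chosen to nearly attain the slope supremum pointwise, the se-merging inequality for $f$ yields $\E\!\left[p(E_{<K})+q(E_{<K})\right]\le1$ for every sequential law of $E_{<K}$. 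Since $p,q\ge0$, this makes $p$, $q$, and $p+q$ all symmetric se-merging functions in $K-1$ arguments, and the inductive hypothesis dominates each by a convex mixture of the $(K-1)$-variable NESPs $\tilde U_m$.

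It then remains to reassemble a single $K$-variable mixture, and I expect this to be the main obstacle. Expanding $\sigma_n(e)=\sigma_n(e_{<K})+e_K\sigma_{n-1}(e_{<K})$ gives $U_n=\frac{K-n}{K}\tilde U_n+\frac{n}{K}\tilde U_{n-1}\,e_K$, so a mixture $\sum_n\lambda_n U_n$ has last-coordinate intercept $\sum_n\lambda_n\frac{K-n}{K}\tilde U_n$ and slope $\sum_n\lambda_n\frac{n}{K}\tilde U_{n-1}$, and it dominates $f$ as soon as these dominate $p$ and $q$ respectively. The difficulty is that both profiles are driven by the \emph{same} weights (the coefficients $\tfrac{K-n}{K}\lambda_n$ and $\tfrac{n}{K}\lambda_n$ come from a common $\lambda_n$, forcing $n\,\alpha_n=(K-n)\,\beta_{n-1}$ between the two coefficient families), so the two separate dominations returned by the inductive hypothesis need not admit a common $\lambda$. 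I would resolve this by strengthening the inductive hypothesis to a degree-graded statement tracking intercept and slope jointly, and by exploiting the symmetry of $f$, which links the two profiles (the intercept $f(e_{<K},0)$ and the last-coordinate slope of $f$ are permutation images of one another) so that the gradings align. As an alternative that sidesteps reassembly, one can pass to the dual: by Sion's minimax, domination by some convex mixture of NESPs is equivalent to $\E_\nu(f)\le\max_{0\le n\le K}\E_\nu(U_n)$ for every probability measure $\nu$ on $[0,\infty)^K$, which after symmetrizing $\nu$ reads $\E_\nu(f)\le\max_n\E_\nu(E_1\cdots E_n)$; I would attack this by reducing $\nu$ to extreme two-point-per-coordinate configurations and matching them against the adaptive two-point e-variables constructed above.
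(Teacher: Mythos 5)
Your base case is fine, and your identification of the last-coordinate affine majorant $p(e_{<K})+q(e_{<K})\,e_K$ together with the adaptive two-point construction giving $\E[p(E_{<K})+q(E_{<K})]\le1$ is a correct use of sequentiality. But the proof has a genuine gap exactly where you predict ``the main obstacle'': the inductive step is never closed, and the data you propose to carry through the induction cannot close it. Concretely, what you extract from the induction is that $p$ and $q$ are nonnegative, symmetric in $e_{<K}$, and that $p+q$ is dominated by a convex mixture of $(K-1)$-variable NESPs. That information is insufficient: the (non-symmetric) se-merging function $f(e)=e_1\cdots e_{K-1}$ has $p=\tilde{U}_{K-1}$, $q=0$, and satisfies all of these conditions, yet it is dominated by \emph{no} convex mixture of $K$-variable NESPs (set $e_K=0$ and $e_1=\dots=e_{K-1}=C\to\infty$: the left side grows like $C^{K-1}$ while $\lambda_{K-1}U_{K-1}$ contributes at most $\tfrac{1}{K}C^{K-1}$ and $U_K$ vanishes). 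So the full symmetry of $f$ must enter the reassembly in an essential way, beyond making $p$ and $q$ symmetric; your suggested link --- that the intercept and the last-coordinate slope are ``permutation images of one another'' --- is not correct as stated, since both are functions of $e_{<K}$ obtained by different operations (evaluation at $0$ versus a supremum of difference quotients). The ``degree-graded'' strengthening is not formulated, and the minimax reformulation $\E_\nu(f)\le\max_n\E_\nu(U_n)$ for \emph{all} $\nu$ on $[0,\infty)^K$ (not just sequential or product laws) is again the whole difficulty, only gestured at.

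For comparison, the paper does not attempt this induction: it invokes Theorem~1 of \cite{\eV}, which already supplies domination of any se-merging function by a function affine in its last argument (a ``martingale merging function'' in the sense of that paper); symmetry then upgrades affinity in the last argument to multiaffinity in every argument, and the classification in Propositions~\ref{prop:main} and~\ref{prop:symmetric} finishes the job. In effect you are re-deriving the content of that external theorem from scratch and stopping at the point where its real work lies. Either prove and induct on a statement that controls the intercept and slope profiles \emph{jointly} with the linkage $n\alpha_n=(K-n)\beta_{n-1}$ you correctly identified, or cite the domination theorem as the paper does.
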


\begin{proof}
  Theorem 1 in \cite{\eV} says that any se-merging function
  is dominated by a martingale merging function
  (where ``martingale merging function'' is used in a sense
  different from this paper;
  this proof uses ``martingale merging function'' in the sense of \cite{\eV}).
  By definition, a martingale merging function is affine in its last argument.
  By symmetry, it is affine in each argument.
  It remains to follow the reasoning
  of Sects~\ref{subsec:proof-main} and~\ref{subsec:proof-symmetric}.
\end{proof}

Proposition~\ref{prop:domination} appears to be less interesting
than Propositions~\ref{prop:main} and~\ref{prop:symmetric}:
the family of symmetric se-merging functions is not as natural
as the other two symmetric families in Fig.~\ref{fig:structure-symmetric}.

\section{Computing NESPs}
\label{app:Newton}

In this appendix we will discuss how to compute
the NESP $U_n=U_n(s_1,\dots,s_K)$ for a fixed $n$ efficiently,
namely, in time $O(K)$.
We will use the fact that the polynomials
$p_n:=s_1^n+\dots+s_K^n$ can be computed in time $O(K)$,
and so one way to compute $U_n$ efficiently is to express them via $p_n$.

These are the efficient representations for the first few NESPs:
\begin{align*}
  U_1(s_1,\dots,s_K)
  &=
  \frac{1}{K}
  (s_1+\dots+s_K)\\
  U_2(s_1,\dots,s_K)
  &=
  \frac{1}{K(K-1)}
  \left(
    (s_1+\dots+s_K)^2
    -
    (s_1^2+\dots+s_K^2)
  \right)\\
  U_3(s_1,\dots,s_K)
  &=
  \frac{1}{K(K-1)(K-2)}
  \bigl(
    (s_1+\dots+s_K)^3\\
    &\quad-
    3
    (s_1^2+\dots+s_K^2)
    (s_1+\dots+s_K)
    +
    2
    (s_1^3+\dots+s_K^3)
  \bigr)\\
  U_4(s_1,\dots,s_K)
  &=
  \frac{1}{K(K-1)(K-2)(K-3)}
  \bigl(
    (s_1+\dots+s_K)^4\\
    &\quad-
    6
    (s_1^2+\dots+s_K^2)
    (s_1+\dots+s_K)^2\\
    &\quad+
    8
    (s_1^3+\dots+s_K^3)
    (s_1+\dots+s_K)\\
    &\quad+
    3
    (s_1^2+\dots+s_K^2)^2
    -
    6
    (s_1^4+\dots+s_K^4)
  \bigr).
\end{align*}
It is clear that such a representation exists for any fixed $n$,
and it allows us to compute $U_n(s_1,\dots,s_K)$ in time $O(K)$.
In terms of \emph{Bell polynomials}
\begin{equation*}
  B_n(x_1,\dots,x_n)
  :=
  n!
  \sum_{\substack{(j_1,\dots,j_n)\in\N^n:\\j_1+2j_2+\dots+n j_n=n}}
  \prod_{i=1}^n
  \frac{x_i^{j_i}}{(i!)^{j_i}j_i!},
\end{equation*}
where $\N:=\{0,1,\dots\}$ is the set of natural numbers,
the general expression is
\begin{equation*}
  U_n(s_1,\dots,s_K)
  =
  \frac{(K-n)!}{K!}
  B_{n}(p_1, -p_2, 2! p_3, -3! p_4, \dots, (-1)^{n-1}(n-1)! p_n),
\end{equation*}
where $p_n:=s_1^n+\dots+s_K^n$.

A less straightforward way of computing the elementary symmetric polynomials
(and therefore, $U_n$)
via $p_1,p_2,\dots$ in time $O(K)$
would be to use recursion and Newton's identities
(see, e.g., \cite[Lemma~4 of Chap.~4]{Bourbaki:algebra-II}
or \cite[Theorem~4.5.5]{Nicholson:1999}).
\end{document}